\documentclass[12pt,letterpaper]{article}
\usepackage{amssymb,amsmath,amsthm,enumerate,nicefrac}
\usepackage{graphicx, color}
    \usepackage[driverfallback=hypertex,pagebackref=true,colorlinks]{hyperref}
    \hypersetup{linkcolor=[rgb]{.7,0,0}}
    \hypersetup{citecolor=[rgb]{0,.7,0}}
    \hypersetup{urlcolor=[rgb]{.7,0,.7}}
\usepackage{geometry}
\usepackage{epstopdf}
\usepackage{float}
\floatstyle{ruled}
\newfloat{algorithm}{tbp}{loa}
\providecommand{\algorithmname}{Algorithm}
\floatname{algorithm}{\protect\algorithmname}

\usepackage[titletoc,title]{appendix}
\usepackage{url}

\newtheorem{theorem}{Theorem}
\newtheorem{lemma}{Lemma}[section]

\newtheorem{corollary}[theorem]{Corollary}

\def\N{{\mathbb {N}}}

\def\F{{\mathbb {F}}}

\def\R{{R}}

\DeclareMathOperator{\rank}{rank}

\title
{Polynomial Lower Bounds for 
Arithmetic Circuits over 
Non-Commutative Rings}
\author{Ran Raz
\thanks{Department of Computer Science, Princeton University. Research supported by a Simons Investigator Award.
Email: \texttt{ran.raz.mail@gmail.com}}}
\date{}
\begin{document}
\maketitle

\begin{abstract}
We prove a lower bound of $\Omega\left(n^{1.5}\right)$ for the number of product gates in
non-commutative arithmetic circuits for an explicit $n$-variate degree-$n$ polynomial $f_{n}$ 
(over every field).

We observe that this implies
that over certain non-commutative rings $\R$, any arithmetic circuit that computes the induced polynomial function $f_{n}: \R^n \rightarrow \R$,
using the ring operations of addition and multiplication in $\R$, requires 
at least $\Omega\left(n^{1.5}\right)$ multiplications.

More generally, for any $d\geq 2$ and sufficiently large $n \in \N$,  we obtain a lower bound of $\Omega\left(d\sqrt{n}\right)$ for $n$-variate degree-$d$ polynomials, for both these models.

Prior to our work, the only known lower bounds for the size of non-commutative circuits,  or for the size of arithmetic circuits over any ring, were slightly super-linear in $\max\{n,d\}$: $\Omega\left(n\log d\right)$ by Baur and Strassen~\cite{Str73,BS83}, and $\Omega\left(d\log n\right)$ by Nisan~\cite{Nis91}\footnote{This bound was proved for non-commutative arithmetic circuits and implies a bound for arithmetic circuits over non-commutative rings by our observation.}.

\end{abstract}

\section{Introduction}

Arithmetic circuits are the standard computational model for computing polynomials, such as the determinant or the permanent of a matrix. While arithmetic circuits are usually defined over fields, they are often defined and studied more generally, over rings.
Given a ring $\R$ and an $n$-variate polynomial function $f(x_1,\ldots,x_n)$ over~$\R$, 
a fundamental question is: what is the minimal number of  $+, \times$ ring operations in $\R$,  needed to compute $f$?

The only known lower bounds for the size of general arithmetic circuits over fields, for explicit  $n$-variate polynomials of degree $d$, are of the form $\Omega\left(n\log d\right)$, first established by Strassen~\cite{Str73} (for circuits with $n$ outputs) and by Baur and Strassen~\cite{BS83} (for circuits with one output). 
Improving these bounds is a major, long-standing open problem.
Over the half-century since these landmark results, researchers have also studied a variety of restricted models of arithmetic circuits.

One of the earliest and most extensively studied restricted models is that of non-commutative circuits.
Non-commutative circuits are the standard computational model for computing non-commutative polynomials, that is, polynomials in which the variables do not commute.
Formally, a non-commutative polynomial in variables $x_1,\ldots,x_n$, over a field $\mathbb{F}$, is a formal linear combination of words over the alphabet $\{x_1,\ldots,x_n\}$, with coefficients in~$\mathbb{F}$.
The sum of two non-commutative polynomials is their sum as linear combinations over~$\mathbb{F}$, and their product is defined by defining the product of two non-commutative monomials (words over the alphabet $\{x_1,\ldots,x_n\}$) as their concatenation and extending bilinearly to  all pairs of non-commutative polynomials.
A non-commutative arithmetic circuit is defined similarly
to a standard arithmetic circuit, except that the inputs to each product gate are ordered, and the gate multiplies them in that order.

Interest in non-commutative computation goes back to the early seventies~\cite{Win70,HK71}. Lower bounds for the size of non-commutative circuits were first studied by Hyafil~\cite{Hya77}.
In~1991, Nisan proved a remarkable exponential lower bound of $n^{\Omega(d)}$ for the size of non-commutative formulas, which implies a lower bound of  $\Omega\left(d\log n\right)$ for the size of non-commutative circuits~\cite{Nis91}. 

\subsection{Motivation}

Beyond the historical interest, the literature offers several motivations for studying non-commutative arithmetic circuits. 
First, many arithmetic computations of interest involve objects that do not commute, such as matrices, making non-commutative circuits a natural model to study.
Second, as with other restricted models, one could hope  that understanding the power of non-commutative circuits will shed light on the power of general arithmetic circuits. In particular, proving lower bounds in the non-commutative setting is easier than proving lower bounds for general arithmetic circuits over fields, and thus provides a natural and challenging intermediate goal.
Finally, comparing the relative power of non-commutative and general arithmetic circuits in order to understand the computational advantage of commutativity in arithmetic computations is an interesting goal in its own right.

From our perspective, an additional central motivation for investigating lower bounds for non-commutative circuits is their direct applicability to arithmetic circuits over non-commutative rings. 
When considering arithmetic circuits over non-commutative rings, the non-commutativity of the model is not an imposed restriction, but rather an inherent property of the underlying algebraic structure, as the ring elements do not always commute. 

Specifically, we observe that any lower bound for non-commutative arithmetic circuits over fields implies a corresponding lower bound for arithmetic circuits over certain non-commutative rings. In particular, our lower bound for  non-commutative circuits implies a lower bound of $\Omega\left(n^{1.5}\right)$ for the number of multiplications, and hence also for the total number of ring operations, needed to compute 
an explicit $n$-variate degree-$n$ polynomial function from $\R^n$ to $\R$, 
where $\R$ is a certain ring. 

We find this connection significant since, to the best of our knowledge, these are the first lower bounds for the size of arithmetic circuits over any ring, beyond the $\Omega\left(n\log d\right)$ lower bounds of Baur and Strassen~\cite{Str73,BS83}, and the $\Omega\left(d\log n\right)$ lower bound that can be obtained by Nisan's result~\cite{Nis91} (via the connection that we observe here).

Although this connection between lower bounds for non-commutative arithmetic circuits and lower bounds for arithmetic circuits over non-commutative rings is straightforward, we are not aware of any prior work in which it is stated,  
proved, or proposed as a motivation for studying non-commutative circuits. 
We note however that a related viewpoint is implicit in previous works, such as~\cite{CS07,AS18,CHSS11}.

Specifically, 
Chien and Sinclair consider the function computed by an arithmetic branching program over a field $\F$ when the inputs are taken from  an $\F$-algebra~$A$, rather than from the field $\F$ itself.
They note that when  $A = \F\langle X \rangle $ is the free algebra over~$\F$,
Nisan's lower bound for non-commutative arithmetic branching programs over $\F$ implies a lower bound for computing polynomial functions from  ${A}^n$ to ${A}$ by arithmetic branching programs over~$\F$.
Their main results show similar lower bounds for other $\F$-algebras~\cite{CS07}.

The ring that we consider here is also $\F\langle X \rangle$. The key difference between our observation and that of Chien and Sinclair is that we consider arithmetic circuits over the ring itself, so multiplication by arbitrary ring elements is allowed, whereas Chien and Sinclair consider arithmetic branching programs over the base field 
$\F$, where only elements of 
$\F$ may appear as scalars.

\subsection{Our Results}

For every $d\geq 2$ and sufficiently large $n \in \N$,  we give an explicit $n$-variate non-commutative polynomial $f_{n,d}$ of degree $d$, with coefficients in $\{0,1\}$, such that over every field, any non-commutative circuit computing~$f_{n,d}$ requires at least $\Omega\left(d\sqrt{n}\right)$ non-scalar product gates.

We observe 
that any lower bound for non-commutative arithmetic circuits over fields implies a corresponding lower bound for arithmetic circuits over non-commutative rings.
Specifically,
given a field  $\F$ and a set of variables $Z=\{z_1,\ldots,z_n\}$, we consider the ring 
of non-commutative polynomials in variables $z_1,\ldots,z_n$, over the field $\F$, that is,
the ring $R = \F\langle Z \rangle$. 
We prove that for every 
non-commutative polynomial $f(x_1,\ldots,x_n)$,
if there exists an arithmetic circuit 
over the ring $R$, that computes 
$f(x_1,\ldots,x_n)$ as a function from $R^n$ to $R$,
then there exists a non-commutative arithmetic circuit for $f$ over the field~$\F$, with the same number of sum gates, the same number of product gates, and the same size and depth.

As a consequence, over the ring $R = \F\langle Z \rangle$, any computation of the induced polynomial function $f_{n,d}: \R^n \rightarrow \R$, 
using the ring operations of addition and multiplication in $\R$, requires at least $\Omega\left(d\sqrt{n}\right)$ multiplications. 

We  note that the same proof yields the same lower bound over the ring
$R = \F\langle z_1, z_2, \ldots \rangle$, the free algebra over countably many variables. This may be interesting as this ring is fixed and does not depend on $n$.

In the special case $d=n$, our results give a  lower bound of $\Omega\left(n^{1.5}\right)$, for both these models.
In the special case $d=n^c$, for an arbitrarily large constant $c$, our results give a lower bound of $\Omega\left(n^{c+0.5}\right)$ for both these models.

\subsection{Related Work}

As mentioned above, Nisan established a remarkable exponential lower bound of $n^{\Omega(d)}$ for the size of non-commutative formulas~\cite{Nis91}. This  implies a lower bound of $\Omega\left(d\log n\right)$ for the size of non-commutative circuits. Nisan also explicitly listed 
proving a lower bound for non-commutative circuit size
as an open problem~\cite{Nis91}. While non-commutative circuits have been extensively studied in subsequent works, no better lower bound has been obtained.

Hrubeš, Wigderson and Yehudayoff
initiated a direction for proving exponential lower bounds for the size of non-commutative circuits by connecting the problem to the so-called Sum-of-Squares problem~\cite{HWY11}.
Several works established exponential lower bounds for non-commutative circuits under additional restrictions~\cite{LMS16,LLS19,LMP19,LTS22}.

Two recent works established polynomial
lower bounds
for non-commutative circuits under certain additional restrictions:
Chatterjee and Hrubeš proved a lower bound of 
$\Omega\left(nd\right)$, when $d \leq n$, and 
$\Omega\left(\tfrac{nd\log n}{\log d}\right)$, when $d > n$, for the size of
non-commutative  circuits under the additional restriction that the circuit is homogeneous, that is, under the restriction that the polynomial computed by any subcircuit is a homogeneous polynomial~\cite{CH23}. Shastri proved a lower bound of 
$\Omega\left(n^{1+\epsilon}\right)$, when $d = n$, for the size of
non-commutative  circuits under the (weaker) restriction that the syntactic degree of the circuit is $O(n)$~\cite{Sha26}.

Carmosino, Impagliazzo, Lovett and Mihajlin studied hardness amplification for non-commutative circuits, by reducing the number of input variables, and  proved that lower bounds of $\Omega(n^{\omega/2+\epsilon})$ for the size of non-commutative circuits
for polynomials of constant degree imply exponential lower bounds for non-commutative circuits (where $\omega$ is the best exponent for matrix multiplication)~\cite{CILM18}. We were unable to use their results or techniques to strengthen our results.

Non-commutative circuits have been studied in many additional works. Some of the topics that were investigated are: 
The ${\mathsf{VP}}$ vs. ${\mathsf{VNP}}$ problem in the non-commutative setting~\cite{HWY10};
Non-commutative arithmetic circuits with division gates~\cite{HW15}; The relative hardness of permanent and determinant in the non-commutative setting~\cite{AS18,CHSS11,Bla15,Gen14}; The polynomial identity testing problem in the non-commutative setting~\cite{RS05,BW05,FS13,LMP19,GGOW16,BFGOWW19}.

For excellent introductions to arithmetic circuit complexity and lower bounds for arithmetic circuits, see~\cite{SY10,BCS13,Sap21}.

\subsection{Proof Outline}

\subsubsection{Lower Bounds for Non-Commutative Circuits}

Let $C$ be a non-commutative circuit with  input variables  $x_1,\ldots,x_n$, over a field $\F$. Let~$f$ be the non-commutative polynomial computed by $C$. 
Assume that $f$ is a homogeneous polynomial of degree $d$. Assume that $d$ is even.

We start by modifying the circuit $C$ to a new circuit that computes $f$ and has at most the same number of non-scalar product gates as~$C$. The new circuit will have several additional properties.
First, every node $v$ in the modified circuit computes a polynomial in which all monomials are of degree at least~1, that is, a polynomial without a constant term.
Second, there are no scalar-product gates in the circuit (or scalars at all). Instead, 
every edge to a sum gate is labeled with a field element.
By a standard convention, 
this element multiplies the output of the edge. 
We adopt this  convention to simplify the presentation.
Finally, the $+,\times$ gates in the modified circuit are alternating, along every path from the output gate to an input gate.

The main step of the modification splits every node $v$ in the original circuit into two nodes. One of these nodes computes the degree-zero part of the polynomial computed by $v$ and the other computes the positive-degree part of the polynomial computed by~$v$.
Later on, all constants can be removed.

For a non-commutative polynomial $f$ and  $a,b \in \N$, we define
$M_f^{a,b}$ to be the $n^a \times n^b$ matrix whose entry at
row $(i_1,\ldots,i_a)$ and column $(j_1,\ldots,j_b)$ is the coefficient in $\F$ of the monomial (word)
$x_{i_1}\ldots x_{i_a}x_{j_1}\ldots x_{j_b}$ in the polynomial $f$, where
$i_1,\ldots,i_a, j_1,\ldots,j_b \in \{1,\ldots,n\}$.
This definition goes back to Nisan's breakthrough work~\cite{Nis91}.

Nisan proved his lower bound for non-commutative formulas through a rank argument on the matrix $M_f^{a,b}$.
Similar definitions and rank-based methods have been used by numerous subsequent works, establishing this approach as one of the main techniques for proving lower bounds for arithmetic circuits - see for example~\cite{NW97,Raz09,Raz06,LTS25}.

For a node $v$ in the circuit, 
let $f_v$ be the non-commutative polynomial computed by the node~$v$. 
We denote the matrix $M_{f_v}^{a,b}$  by $M_{v}^{a,b}$.
We analyze the evolution of the rank of $M_{v}^{a,b}$ for different values of $a,b$ along a path from the output gate to an input gate.

We  define a path $(v_0,v_1,\ldots,v_t)$ that goes backward on the circuit, starting from the output gate. 
For every~$i$, we  define $v_{i+1}$ to be one of the children of~$v_i$, according to some specific rules.
For every $v_i$, we  also define $a_i,b_i \in \N$, where
$a_0,b_0 = d/2$. When $v_i$ is a sum gate we will have $a_{i+1} = a_i$ and $b_{i+1} = b_i$ and when  $v_i$ is a product gate we will have either $a_{i+1} = a_i$ and $b_{i+1} = b_i -j$, or $a_{i+1} = a_i -j$ and $b_{i+1} = b_i$, for some $j\geq 1$.

For every $i$, we define,
$$
r_i \; = \; \rank \big( M_{v_i}^{a_i,b_i} \big)
$$
By assumption, we will have $r_0= n^{d/2}$. We will stop when $r_t \leq 1$. Hence, $r_i$ decreases rapidly along the path. On average, each alternation of a sum gate and a product gate needs to reduce~$r_i$ by a factor of roughly $\sqrt{n}$, as there will be at most $d$ such alternations along the path (because when $v_i$ is a product gate, $a_{i+1}+b_{i+1} < a_i+b_i$).

When defining the path $(v_0,v_1,\ldots,v_t)$, we try to minimize the decrease of the value of~$r_i$ along the path, though several other parameters need to be taken into account.
How can the value of $r_i$ decrease rapidly along the path?

If $v_i$ is a sum gate, with children $u_1,\ldots,u_k$,  by subadditivity of the rank, we have,
\begin{equation*}
\rank \left( M_{v_i}^{a_i,b_i} \right) \;\;\leq\;\; \sum_{j=1}^k \rank \left( M_{u_j}^{a_i,b_i} \right)
\end{equation*}
Assume for simplicity of this outline that all terms in the sum on the right-hand side of the equation are equal.
Recall that we will have $v_{i+1} \in \{u_1,\ldots,u_k\}$ and $a_{i+1} = a_i, b_{i+1} = b_i$. Thus, the ratio between $r_i$ and $r_{i+1}$ is at most $k$, and this costs the circuit $k$ gates: $u_1,\ldots,u_k$. Since the ratio needs to be close to $\sqrt{n}$ on average, this typically costs the circuit around~$\sqrt{n}$ gates. This needs to occur close to $d$ times along the path, resulting in a lower bound of $\Omega(d \sqrt{n})$ for the number of gates.
We will choose $v_{i+1} \in \{u_1,\ldots,u_k\}$ such that none of the nodes in $\{u_1,\ldots,u_k\}$ is a descendant of $v_{i+1}$, to make sure that the gates considered later on along the path are different from $\{u_1,\ldots,u_k\}$.

What about product gates?
If $v_i$ is a product gate, with children $u_1,u_2$, we prove,
\begin{equation*}
\rank \left( M_{v_i}^{a_i,b_i} \right) \;\;\leq \;\;
\sum_{j=1}^{a_i} \rank \left( M_{u_2}^{a_i-j,b_i} \right) +
\sum_{j=1}^{b_i-1} \rank \left( M_{u_1}^{a_i,b_i-j} \right)
\end{equation*}
(In this proof, we use the property that every node $v$ in the modified circuit computes a polynomial in which all monomials are of degree at least~1).
Recall that we will have $v_{i+1} \in \{u_1,u_2\}$ and 
either $a_{i+1} = a_i$ and $b_{i+1} = b_i -j$, or $a_{i+1} = a_i -j$ and $b_{i+1} = b_i$, for some $j\geq 1$. Potentially, there are many terms here too, but note that on the right-hand side of the equation there are only two terms with $j=1$, and other terms result in a too rapid decrease in the degree $a_i+b_i$, which is costly for the circuit because it decreases the number of steps until the path reaches a leaf, and hence increases the average factor in which~$r_i$ needs to be reduced in each step along the path. Therefore, we will be able to choose 
$v_{i+1}, a_{i+1}, b_{i+1}$ such that either the ratio between $r_i$ and $r_{i+1}$ is less than a sufficiently large constant, or the degree decreases by more than~1, which means that the circuit needs more gates in later steps. We will choose 
$v_{i+1}, a_{i+1}, b_{i+1}$ to maximize a tradeoff between the ratio and the degree.

\subsubsection{Lower Bounds for Arithmetic Circuits over Rings}

Let $\F$ be a field.
Let $X=\{x_1,\ldots,x_n\}$, $Z=\{z_1,\ldots,z_n\}$ be sets of variables. 
Let $R = \F\langle Z \rangle$.
Let $f(x_1,\ldots,x_n) \in \F\langle X \rangle$.
We will consider non-commutative arithmetic circuits for $f$ over the field $\F$, on one hand, and on the other hand, arithmetic circuits over the ring $R$ for the induced polynomial function $f(x_1,\ldots,x_n)$, as a function from $R^n$ to $R$.

We prove that
if there exists an arithmetic circuit with set of input variables $X$,
over the ring $R$, that computes 
$f(x_1,\ldots,x_n)$ as a function from $R^n$ to $R$,
then there exists a non-commutative arithmetic circuit for $f$ over the field~$\F$ with the same number of sum gates, the same number of product gates, and the same size and depth.

In  an arithmetic circuit with a set of input variables $X$,
over the ring $\F\langle Z \rangle$, every leaf  is labeled with either a ring element $p \in \F\langle Z \rangle$, or an input variable $x_i$. 
This naturally
defines for each
node in the circuit
a non-commutative polynomial, in the ring of non-commutative polynomials $\F\langle Z, X \rangle$, that is computed by that node.

Let $g(z_1,\ldots, z_n, x_1,\ldots,x_n) \in \F\langle Z, X \rangle$ be the non-commutative polynomial computed by the output node of the circuit. Let $g^0(x_1,\ldots,x_n) \in \F\langle X \rangle$ be 
the restriction of $g$ to monomials in which none of the variables in $Z$ appear.
We can easily modify the circuit to be a non-commutative arithmetic circuit over $\F$ for the polynomial $g^0(x_1,\ldots,x_n)$:  
For every leaf  labeled with $p\in \F\langle Z \rangle$, we just replace~$p$ with its constant term $p^0 \in \F$.
It remains to prove that 
$g^0(x_1,\ldots,x_n) = f(x_1,\ldots,x_n)$. 

By the requirement from the circuit, we know that for every 
$h_1,\ldots, h_n \in \F\langle Z \rangle $, we have
$$g(z_1,\ldots, z_n, h_1,\ldots,h_n)= f(h_1,\ldots,h_n)$$
(as polynomials in $\F\langle Z \rangle$).
We show that this implies 
$$g^0(x_1,\ldots,x_n) = f(x_1,\ldots,x_n)$$ (as polynomials in $\F\langle X \rangle$).

The proof is by choosing $h_i = z_i^D$, where $D \in \N$ is larger than the degree of $g$, and noticing that, after this substitution, monomials in $g-g^0$ must cancel because their degree is not divisible by $D$, whereas monomials in $g^0$ and monomials in $f$ are of degree divisible by~$D$ (after the substitution).
Note that this argument does not prove that  $g-g^0$ is identically 0, just that it becomes 0 when substituting for every $i$, $x_i = z_i^D$.

We hence have,
$g^0(z_1^D,\ldots,z_n^D) = f(z_1^D,\ldots,z_n^D)$, 
and since $x_i \leftrightarrow z_i^D$ gives a bijection between words in 
$\{x_1,\ldots,x_n\}$ and words in $\{z_1^D,\ldots,z_n^D\}$,
we conclude that
$g^0(x_1,\ldots,x_n) = f(x_1,\ldots,x_n)$.

\section{Preliminaries}

\subsection{Non-Commutative Polynomials}

For a set of variables $X=\{x_1,\ldots,x_n\}$ and a field $\F$,
we denote by $\F\langle X \rangle = \F\langle x_1,\ldots,x_n \rangle $ the  ring of non-commutative  polynomials in (non-commuting) variables $x_1,\ldots,x_n$, with coefficients in $\F$.
A polynomial in $\F\langle X \rangle$ is a formal linear combination of words over the alphabet $\{x_1,\ldots,x_n\}$,
 with coefficients in 
$\F$.
The sum of two polynomials in $\F\langle X \rangle$ is their sum as linear combinations over $\F$, and their product is defined by defining the product of two monomials (words) as their concatenation and extending bilinearly to  all pairs of polynomials in $\F\langle X \rangle$.

\subsection{Arithmetic Circuits} \label{subsection: arithmetic circuits}

An arithmetic circuit with input variables
$x_1,\ldots,x_n$, over a ring $\R$,
is a directed acyclic graph as follows:
A node of in-degree~0 is called a leaf.
Every leaf in the circuit
is labeled with either an input variable
or a ring element.
A leaf
labeled with an input variable
is called an input gate.
Every non-leaf node is labeled with either $+$ or~$\times$,
in the first case the node is a sum  gate and in the second case a
product gate.
We assume that the in-degree of every product gate in the circuit is~2. 
The children of each product gate are ordered, and we refer to them 
as the left child and the  right child.
A sum gate may have an arbitrary in-degree greater than~0.
One node of out-degree~0 is called the output gate. 
We assume that only the output gate has out-degree 0, as other nodes with out-degree 0 can be removed. 
The circuit is called a formula if the underlying
graph is a (directed) tree.

The size of an arithmetic circuit is defined as the number of wires (edges) in it.
The depth of an arithmetic circuit is defined as the length of the longest directed path from a leaf to the output gate of the circuit.
In this paper, we will mainly be interested in the number of (non-scalar) product gates in the circuit, rather than the size of the circuit.

Note that we use this same definition of an arithmetic circuit for three different settings: general (commutative) arithmetic circuits over fields; non-commutative arithmetic circuits over fields; and, arithmetic circuits over (non-commutative) rings. The difference will be in the type of computation performed in each of the three cases, as we discuss next.

\subsection{Computation of Arithmetic Circuits over  Fields}

Given an arithmetic circuit with input variables $x_1,\ldots,x_n$, over a field $\F$,
each node in the circuit (and in particular the output node)
computes a polynomial in the ring of polynomials $\mathbb{F}[x_1,\ldots,x_n]$ as follows: A leaf just computes the input variable, or field element,
that labels it. 
A sum gate computes the sum of the polynomials computed by its children.
A product gate computes the product of the two polynomials computed by its children.
The polynomial computed by the circuit is the polynomial computed by the output gate.
 
\subsection{Computation of Non-Commutative Circuits over  Fields}

Non-commutative arithmetic circuits over fields are defined in the same way as standard arithmetic circuits over fields, except that they compute non-commutative polynomials rather than ordinary commutative polynomials. 
In other words, when we refer to an arithmetic circuit as non-commutative, we mean that the input variables are treated as non-commuting.

Given an arithmetic circuit with input variables $x_1,\ldots,x_n$, over a field $\F$, if we refer to the circuit as a non-commutative circuit, 
each node in the circuit (and in particular the output node)
computes a non-commutative polynomial in the ring of non-commutative polynomials $\mathbb{F}\langle x_1,\ldots,x_n \rangle$ as follows: A leaf just computes the input variable, or field element, 
that labels it. 
A sum gate computes the sum of the non-commutative polynomials computed by its children.
A product gate computes the product of the non-commutative polynomial computed by its left child and the non-commutative polynomial computed by its right child, in that order.
The non-commutative polynomial computed by the circuit is the non-commutative polynomial computed by the output gate.

\subsection{Computation of  Circuits over Non-Commutative Rings}

For an arithmetic circuit over a non-commutative ring $\R$, it is not reasonable to treat the input variables as commuting, since elements of $\R$ do not always commute.
On the other hand, treating the variables as fully non-commuting may fail to capture nontrivial relations satisfied in $\R$.
We therefore view  an arithmetic circuit over a non-commutative ring $\R$ as computing a {\em function} from $\R^n$ to $\R$, rather than a {\em polynomial}. 

Given an arithmetic circuit with input variables $x_1,\ldots,x_n$, over a ring $\R$, 
each node in the circuit (and in particular the output node)
computes a function from $\R^n$ to $\R$, as follows: A leaf labeled with an input variable $x_i$ computes the function that returns the $i$-th coordinate of the input. 
A leaf labeled with a ring element computes the function that returns the ring element 
that labels it. 
A sum gate computes the sum of the functions computed by its children.
A product gate computes the product of the function computed by its left child and the function computed by its right child, in that order.
The function computed by the circuit is the function computed by the output gate.

\section{Lower Bounds for Non-Commutative Circuits}

\subsection{Notation} \label{subsec:notation}

We denote by $\N$ the set of natural numbers, including 0.
Let $X=\{x_1,\ldots,x_n\}$ be a set of variables. Let $\F$ be a field. Assume that $n \geq2$. 

For $f \in \mathbb{F}\langle X \rangle$ and $r \in \N$, let $f^r \in \mathbb{F}\langle X \rangle$ be the homogeneous part of~$f$ of degree $r$. We have, $f = \sum_r f^r$. 
Let $f^{>0} = \sum_{r>0} f^r = f - f^0$.

For $f \in \mathbb{F}\langle X \rangle$ and  $a,b \in \N$, let
$M_f^{a,b}$ be the $n^a \times n^b$ matrix whose entry at
row $(i_1,\ldots,i_a)$ and column $(j_1,\ldots,j_b)$ is the coefficient in $\F$ of the monomial (word)
$x_{i_1}\ldots x_{i_a}x_{j_1}\ldots x_{j_b}$ in the polynomial $f$, where
$i_1,\ldots,i_a, j_1,\ldots,j_b \in \{1,\ldots,n\}$.

For a non-commutative circuit $C$ with a set of input variables  $X$, over the field $\F$, and
a node $v$ in $C$, let $f_v \in \mathbb{F}\langle X \rangle$ be the polynomial computed by the node~$v$. 
For simplicity, we denote the matrix $M_{f_v}^{a,b}$  by $M_{v}^{a,b}$.

A non-scalar product gate in a circuit is a product gate $v$, with children $v_1$ and $v_2$, such that both $f_{v_1}$ and $f_{v_2}$ are polynomials of degree at least~1. 
A scalar-product gate in a circuit is a product gate $v$, with children $v_1$ and $v_2$, such that at least one of  $f_{v_1}$ and $f_{v_2}$ are of degree~0. 

\subsection{Statement of Theorem~\ref{thm:1}} \label{subsec:thm1}

\begin{theorem} \label{thm:1}
Let $n \geq 2^{12}$ and $d\geq 2$. Assume without loss of generality that $d$ is even.
Let $C$ be a non-commutative arithmetic circuit with set of input variables  $X=\{x_1,\ldots,x_n\}$, over a field~$\F$.  Let $f \in \mathbb{F}\langle X \rangle$ be the non-commutative polynomial computed by $C$.
If the matrix~$M_f^{d/2,d/2}$ has full rank, then there are at least 
$\Omega(  d   \sqrt{n} )$ non-scalar product gates in $C$.
\end{theorem}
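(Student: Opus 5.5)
We follow the strategy sketched in the proof outline. First we normalize $C$ without increasing the number of non-scalar product gates. Each node $v$ is split into two nodes, $v^0$ computing $f_v^0\in\F$ and $v^{>0}$ computing $f_v^{>0}$. A sum gate splits into two sum gates. A product gate $v=u_1\times u_2$ has
\[
(f_{u_1}f_{u_2})^{>0}=f_{u_1}^{>0}f_{u_2}^{>0}+f_{u_1}^0 f_{u_2}^{>0}+f_{u_2}^0 f_{u_1}^{>0},
\]
so it needs one product gate together with scalar-labelled edges into a sum gate. All constants are then propagated into edge labels and deleted. Consecutive gates of the same type are merged: sums of sums are collapsed, and a product chain is kept binary with a trivial sum inserted between. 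This gives a circuit in which every node computes a polynomial with no constant term, there are no scalar-product gates, and $+,\times$ alternate along every path. Since $f$ may be assumed homogeneous of degree $d$ (only $f^d$ matters for $M_f^{d/2,d/2}$, and taking homogeneous components of a circuit costs only a factor of $O(d^2)$, which is too much), we instead work directly with the matrices $M^{a,b}$ for arbitrary $f$. The entries of $M^{a,b}_v$ only read the degree-$(a+b)$ part, so homogeneity is not needed.

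Next come the two rank inequalities. For a sum gate, $\rank M_v^{a,b}\le\sum_j\rank M_{u_j}^{a,b}$ by subadditivity; edge labels are harmless. For a product gate $v=u_1u_2$ with $f_{u_1},f_{u_2}$ having no constant term, a word of length $a+b$ in $f_v$ splits as $w_1w_2$ with $|w_1|=k\ge1$ and $|w_2|=a+b-k\ge1$. If $k\le a$, the contribution to $M_v^{a,b}$ has rank at most $\rank M_{u_2}^{a-k,b}$, because its rows are combinations of rows of that matrix after fixing the prefix. If $k>a$, the contribution has rank at most $\rank M_{u_1}^{a,b-(a+b-k)}$. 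Summing gives the displayed inequality of the outline.

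The key step is the path construction with a potential function. We define $(v_i,a_i,b_i)$ starting from the output gate with $a_0=b_0=d/2$ and $r_0=n^{d/2}$. At a sum gate with $k$ children we pick a child $u$ that is topologically first among the children, so no other child is a descendant of it, and that has $\rank M_u^{a,b}\ge r_i/k$ up to adjusting to keep topological minimality. For this we take the children with large rank and choose a topologically minimal one among them, paying a factor of $2$. At a product gate we pick the term that maximizes $r_{i+1}\cdot\lambda^{\,\deg\text{drop}}$. Since the sum has at most $a+b$ terms weighted geometrically, choosing $\lambda=C\sqrt n$ (or any constant $\lambda\ge 4$, depending on the bookkeeping) guarantees $r_{i+1}\ge r_i\,\lambda^{-(j)}/2$ where $j$ is the degree drop. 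We then track the potential $\Phi_i=\log r_i - (a_i+b_i)\log\sqrt n$. At the end, $r_t\le1$ when degree $0$ is reached or a leaf is hit, so $r_t\le 1$ and $a_t+b_t\ge 0$. The total loss $\sum\log k_i$ over sum gates must then be at least $(d/2)\log n$ minus the product-gate losses. The product losses are bounded by $\sum_j(\log 2+j\log\lambda')$ with $\lambda'$ a constant such as $16$. Because the total degree drop is at most $d$, this is $O(d)\le \frac14 d\log n$ for $n\ge2^{12}$.

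Hence $\prod k_i\ge n^{d/4}$ over at most $d$ sum gates along the path, with the distinctness argument making the child sets disjoint. By AM-GM, $\sum k_i\ge d\cdot n^{1/4}$, which is too weak, so the accounting must be sharper. The main obstacle is therefore obtaining $\sqrt n$ per level rather than $n^{1/4}$. The plan is to keep the product-gate loss per unit of degree below $n^{o(1)}$ and to observe that with alternation the number of sum gates on the path is at most the number of degree-decrementing steps. Then $\prod k_i\ge n^{d/2}/c^{d}$ over at most $d$ factors, and AM-GM gives $\sum k_i\ge d\sqrt n/c$. Since each child of a sum gate in the alternating circuit is a product gate (or an input), and these sets are disjoint along the path, the product-gate count is $\Omega(d\sqrt n)$. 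Since sum-gate children may be inputs, we subtract at most $n$ inputs, which is handled by the $\Omega$ and the regime $d\sqrt n\gg n$. Otherwise the bound $\Omega(d\sqrt n)$ for $d\lesssim\sqrt n$ needs a separate input-count argument, and this is the point I expect to require care.
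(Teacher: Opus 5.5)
Your normalization, your two rank inequalities, and your product-gate rule with a constant base (the paper uses $r_{i+1}\ge 2^{-(j+1)}r_i$) all match the paper. The gap is at sum gates, which is the heart of the proof. Disjointness of the counted sets along the path holds only for a set $T$ of children among which $v_{i+1}$ is topologically minimal. Children outside $T$ may be descendants of $v_{i+1}$ and reappear as children of later sum gates. So you cannot count all $k_i$ children, only $|T|$, and what you actually need is $|T|\gtrsim r_i/r_{i+1}$. Your ``paying a factor of $2$'' asserts this, but no per-step rule can achieve it. Suppose the children $u_1,\dots,u_k$ have ranks $\rho_j=r_i/(jH_k)$, where $H_k=\sum_{j\le k}1/j$, and each $u_{j+1}$ is a descendant of $u_j$. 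Then choosing $v_{i+1}=u_j$ forces $T\subseteq\{u_1,\dots,u_j\}$, so $|T|\le j=(r_i/\rho_j)/H_k$. Some levels must therefore lose a factor of order $\log k$, and your AM--GM argument would give only about $d\sqrt n/\log s$ for circuits of size $s$.

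The missing idea is the paper's bucketing combined with amortized accounting. Set $t_i=\frac{c}{\sqrt n}r_i$ with $c=64$.
\begin{itemize}
\item If some child has rank at least $t_i$, move to it and count nothing (Rule-1).
\item Otherwise, split the children into classes $N_{i,k}$ with ranks in $[4^{-(k+1)}t_i,4^{-k}t_i)$. Some class has total rank at least $2^{-(k+1)}r_i$, hence $|N_{i,k}|>\frac{\sqrt n}{2c}2^{k}$. Take $v_{i+1}$ topologically minimal within that class, so $r_{i+1}\ge 4^{-(k+1)}t_i$, and count only that class.
\end{itemize}
The members of the counted class have positive $M^{a_i,b_i}$-rank with $a_i,b_i\ge 1$, so they are product gates, not inputs. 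This removes your worry about leaves.

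The count is then not compared with the loss step by step. Apart from the extra factors $4^{k_i}$, the losses multiply to at most $(\sqrt n/4)^d$: there are at most $d$ sum steps, each losing at most $\sqrt n/16$, and the product steps lose at most $4^{d}$ in total. Since the total loss is $n^{d/2}$, this forces $\sum_i k_i\ge d$. The counted sets then have total size at least $\frac{\sqrt n}{2c}\sum_i 2^{k_i}\ge\frac{\sqrt n}{2c}\,d$.
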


\begin{proof}

In the rest of this section we prove Theorem~\ref{thm:1}.

\subsection{Modifying the Circuit} \label{subsec:modifying}

Let $C$ be a non-commutative circuit with a set of input variables  $X$, over the field $\F$. Let $f \in \mathbb{F}\langle X \rangle$ be the polynomial computed by $C$. 
Assume that $f^{>0} \neq0$.
We will modify the circuit $C$ to a new circuit that computes $f^{>0}$ and has at most the same number of non-scalar product gates as $C$. The new circuit will have the following additional properties:
\begin{enumerate}
\item \label{p1}
There are no scalar-product gates in the circuit. Instead, 
every edge to a sum gate is labeled with a field element.
By convention, 
this element multiplies the output of the edge. 
We adopt this  convention to simplify the presentation.
By default, an edge is labeled with~1, unless said otherwise.
\item \label{p2}
Every node $v$ in the modified circuit computes a non-zero polynomial in which all monomials are of degree at least~1, that is, a polynomial without a constant term.
Formally, every  node $v$ has $f^0_v=0$, and $f_v=f^{>0}_v \neq 0$. 
\item \label{p3}
All edges from the leaves are to sum gates.
\item \label{p4}
The output gate is a sum gate.
\item \label{p5}
The gates are alternating.
That is, if $u$ is a product gate and $(u,v)$ is an edge then $v$ is a
sum gate, and if $u$ is a sum-gate and $(u,v)$ is an edge then $v$ is a product gate.
\end{enumerate}

Note that the most important property in this list is Property~\ref{p2}. The other properties are straightforward and are introduced in order to simplify the presentation. 
We will modify the circuit $C$ by the following steps.

\subsubsection*{Step 1: (Splitting each node into a degree-zero part and a positive-degree part):}

We modify the circuit to a new circuit such that for every node $v$ in the new circuit,
either $f^0_v=0$ or $f^{>0}_v=0$.
Note that $f^0_v=0$ implies $f_v=f^{>0}_v$ which means that all monomials in $f_v$ are of degree at least~1, while $f^{>0}_v=0$ implies
$f_v=f^0_v$ which
means that $f_v$ is just a constant field element and hence $v$ can be replaced by a leaf labeled with that field element.

The modification is done as follows. For every node $v$ in the original circuit, we split~$v$ into two nodes, $v^0$ and $v^{>0}$, in the new circuit, where $v^0$ computes the degree-zero part of the polynomial computed by $v$ and $v^{>0}$ computes the positive-degree  part of the polynomial computed by $v$, that is, $f_{v^0} = f^0_v$ and $f_{v^{>0}} = f^{>0}_v$.

We do that by induction over the circuit, starting from the leaves. The leaves are already in this form as each leaf $v$ computes either an input variable (in which case $f^0_v=0$) or a field element (in which case $f^{>0}_v=0$). 
Formally, if $v$ is a leaf labeled with an input variable we define $v^{>0}$ to be the same as  $v$, that is, $v^{>0}$ is a leaf labeled with the same input variable as $v$, and we add a leaf $v^0$ labeled with the field element~0, and if $v$ is a leaf labeled with a field element 
we define $v^0$ to be the same as  $v$, that is, $v^0$ is a leaf labeled with the same field element as $v$, and we add a leaf $v^{>0}$ labeled with the field element~0.

If $v$ is a sum gate, with children $v_1,\ldots,v_k$, we define $v^0$ to be a sum gate, with children $v^0_1,\ldots,v^0_k$, and $v^{>0}$ to be a sum gate, with children $v^{>0}_1,\ldots,v^{>0}_k$. We have,
$
f_{v^0}=\sum_{i=1}^k f_{v^0_i} 
$,
and 
$
f_{v^{>0}}=\sum_{i=1}^k f_{v^{>0}_i} 
$.

If $v$ is a product gate, with children $v_1,v_2$, we 
define $v^0$ to be a product gate, with children $v^0_1,v^0_2$,
so that we have,
$
f_{v^0}=f_{v^0_1} f_{v^0_2}
$.

As for $v^{>0}$, we separate into two cases, the case where $v$ is a scalar-product gate and the case where $v$ is a non-scalar product gate. If $v$ is a scalar-product gate, one of the polynomials $f_{v^{>0}_1}$ or $f_{v^{>0}_2}$ is equal to~0. Assume that $f_{v^{>0}_1}=0$ (the case where $f_{v^{>0}_2}=0$ is similar). We define in this case $v^{>0}$
to be a product gate, with children $v^0_1,v^{>0}_2$, so that we have,
$
f_{v^{>0}}=f_{v^0_1} f_{v^{>0}_2}
$. Note that in this case  $v^{>0}$ is a scalar-product gate, since $f_{v^0_1}$ is of degree~0.

If $v$ is a non-scalar product gate,
we need to have  
$f_{v^{>0}}=f_{v^0_1} f_{v^{>0}_2} + f_{v^0_2} f_{v^{>0}_1} + f_{v^{>0}_1} f_{v^{>0}_2}$,
so we just add a small circuit with output gate $v^{>0}$ that performs this computation. Note that in that small circuit there is one sum gate and 3 product gates. Two of these product gates are scalar-product gates and only one is a non-scalar product gate.

Altogether, in the modified circuit, the number of non-scalar product gates is the same as in the original circuit, as we had one non-scalar product gate in the modified circuit for each non-scalar product gate in the original circuit. Moreover, the total number of gates in the modified circuit is bounded by a small constant times the total number of gates in the original circuit, and the size and depth of the modified circuit are bounded by small constants times the size and depth of the original circuit, respectively. Recall that we only care about the number of non-scalar product gates. 

Since the next steps will not add non-scalar product gates to the circuit,
the final modified circuit will have at most the same number of non-scalar product gates as the original circuit.

\subsubsection*{Step 2: (Setting the output gate):}

If $v$ is the output gate of the original circuit, we take $v^{>0}$ to be the output gate of the modified circuit, and thus the modified circuit computes $f_{v^{>0}}=f^{>0}_v = f^{>0}$, as required. 
Since the next steps will not change the polynomial computed by the circuit, 
the final modified circuit will compute $f^{>0}$.

If~$v^{>0}$ is not a sum gate, 
we just add a new sum gate $o$ with in-degree 1 such that $v^{>0}$ is a child of~$o$, and we take $o$ to be the output gate.
Thus, we obtain Property~\ref{p4}.

\subsubsection*{Step 3: (Removing degree-zero nodes and disconnected parts):}

We simplify the circuit by the  following steps:

{\bf Step 3.1: (Replacing nodes that compute constant field elements, by leaves):} For every node $u$ such that $f^{>0}_u=0$, we know that $f_u$ is just a constant field element $c_u$. Hence, in the modified circuit, we replace each such node $u$ by a leaf labeled with the field element $c_u$ (and remove all edges to the node $u$).
Recall that by Step~1 every other node $v$ in the modified circuit has 
$f_v = f^{>0}_v \neq 0$.

{\bf Step 3.2: (Removing disconnected parts):} We remove from the circuit every node that is not connected to the output gate by a directed path (and all edges to and from that node). In particular, this removes all nodes of out-degree~0 that are not the output node.

{\bf Step 3.3: (Removing scalar-product gates):} Every scalar-product gate $v$ in the modified circuit multiplies a node $u$ and a leaf $w$ labeled with a field element~$c$. For each such scalar-product gate, 
we remove the gate~$v$ and the edge  $(w,v)$ and the leaf $w$ if its new out-degree is~0, and we directly connect~$u$ to all parents of $v$.
We place the field element~$c$ as a label on 
every edge from $u$ to a parent of~$v$ (multiplying any field element that already labels that edge).
By convention, the element that labels an edge multiplies the output of the edge.
We adopt this  convention to simplify the presentation,
as now there are no scalar-product gates in the circuit (except for these labels).
By default, an edge is labeled with~1, unless said otherwise.

Moreover, for any edge $e=(a,b)$, labeled with a field element $c$, such that~$b$ is a product gate, we can further move the label $c$ of the edge $e$ to every edge $e'$ from the gate~$b$ (multiplying any field element that already labels $e'$). We do that for every such edge in the circuit, going over the circuit bottom up, so that in the end, only edges to sum gates are labeled with field elements. Thus, we obtain Property~\ref{p1}.

{\bf Step 3.4: (Removing all leaves labeled with field elements):}
We can now remove from the circuit every leaf $u$ that is labeled with a field element, together with all the edges from $u$. Note that $u$ is not connected to a product gate, because we removed all scalar-product gates in Step~3.3. 
For any sum gate $v$,  since $v$ is not a leaf, it has
$f_v = f^{>0}_v$  (since other nodes were replaced by leaves in Step 3.1).
Hence $f_v$ doesn't change if we remove all leaves labeled with field elements that are children of $v$. Thus, by a bottom up induction over the circuit, the polynomial computed by every gate of the circuit doesn't change when we remove from the circuits all leaves that are labeled with field elements.

Since in Step~3.1 we replaced each node $u$ such that $f^{>0}_u=0$ by a leaf labeled with a field element, and in Step~3.4 we removed all remaining leaves labeled with field elements, we have removed from the circuit all nodes $u$ such that $f^{>0}_u=0$.
Recall that by Step~1, every other node $v$ in the modified circuit has 
$f_v = f^{>0}_v \neq 0$. Thus, we obtain Property~\ref{p2}.

\subsubsection*{Step 4: (Alternating the gates):} 

For every edge from a leaf $w$ to a product gate $v$, we add a sum gate $u$ with in-degree 1 in between $w$ and $v$, that is, we create a sum gate $u$ such that $w$ is a child of $u$ and $u$ is a child of $v$. Thus, we obtain Property~\ref{p3}.

Finally, we ensure that the gates are alternating. For any edge $(u,v)$ such that
$u,v$ are both product-gates, we add a sum gate $w$ with in-degree 1 in between $u$ and $v$, that is, we create a sum gate $w$ such that $w$ is a child of $v$ and $u$ is a child of $w$.
For any edge $(u,v)$ such
that $u,v$ are both sum-gates, we connect each  child $w$ of $u$ directly to $v$ and remove the edge $(u,v)$. The label placed on the edge $(w,v)$ is the product of the label on $(w,u)$ and the one on $(u,v)$. Thus, we obtain Property~\ref{p5}.

We repeat Step~3.2 (removing disconnected parts) if needed.

\paragraph{} 

For the rest of the proof we assume that the circuit $C$ is already the modified circuit.

\subsection{Bounding the Rank at a Sum Gate}

If $v$ is a sum gate, with children $v_1,\ldots,v_k$, and each edge $(v_i,v)$ is labeled with the field element $c_i$, we have,
$$
f_v=\sum_{i=1}^k c_i f_{v_i} 
$$
Hence, for any $a,b \in \N$,
$$
M_{v}^{a,b}=\sum_{i=1}^k c_i M_{v_i}^{a,b} 
$$
Hence, by subadditivity of the rank,
\begin{equation}\label{eq:rank-sum}
\rank \left( M_{v}^{a,b} \right) \;\;\leq\;\; \sum_{i=1}^k \rank \left( M_{v_i}^{a,b} \right)
\end{equation}

\subsection{Bounding the Rank at a Product Gate}

If $v$ is a product gate, with children $v_1,v_2$, we have 
$$
f_v=f_{v_1} f_{v_2}
$$
Hence, for any $r \in \N$,
$$
f^{r}_v=\sum_{i=0}^{r} f^i_{v_1} f^{r-i}_{v_2}
$$
Note that $v$ is a non-scalar product gate (as we removed all scalar-product gates and treat them as labels on the edges - see Property~\ref{p2} in Subsection~\ref{subsec:modifying}).
Therefore, by Property~\ref{p1} in Subsection~\ref{subsec:modifying}, we have
$f^0_{v_1}=0$ and $f^0_{v_2}=0$. 
Hence, for any $r \in \N$,
$$
f^{r}_v=\sum_{i=1}^{r-1} f^i_{v_1} f^{r-i}_{v_2}
$$
Hence, for any $a,b \in \N$,
$$
f^{a+b}_v \;\;=\;\; \sum_{i=1}^{a+b-1} f^i_{v_1} f^{a+b-i}_{v_2} \;\;=\;\;
\sum_{i=1}^{a} f^i_{v_1} f^{a+b-i}_{v_2} + \sum_{i=a+1}^{a+b-1} f^{i}_{v_1} f^{a+b-i}_{v_2} 
$$
$$ \;\;=\;\;
\sum_{i=1}^{a} f^i_{v_1} f^{a+b-i}_{v_2} + \sum_{i=1}^{b-1} f^{a+i}_{v_1} f^{b-i}_{v_2} \;\;=\;\;
\sum_{i=1}^{a} f^i_{v_1} f^{a+b-i}_{v_2} + \sum_{i=1}^{b-1} f^{a+b-i}_{v_1} f^{i}_{v_2}
$$
Hence, for any $a,b \in \N$,
$$
M_{v}^{a,b} \;\;=\;\; \sum_{i=1}^a \left( M_{v_1}^{i,0} \otimes M_{v_2}^{a-i,b} \right) +
\sum_{i=1}^{b-1} \left( M_{v_1}^{a,b-i} \otimes M_{v_2}^{0,i} \right)
$$
(where $\otimes$ denotes Kronecker product (tensor product)).

Note that for every $i$, we have $\rank(M_{v_1}^{i,0}) \leq 1$ as $M_{v_1}^{i,0}$ has one column, and
$\rank(M_{v_2}^{0,i}) \leq 1$ as $M_{v_2}^{0,i}$ has one row.
Recall also that for any two matrices $A,B$, we have $\rank(A\otimes B) = \rank(A)  \rank(B)$.
Hence, by subadditivity of the rank, we have,
\begin{equation}\label{eq:rank-product}
\rank \left( M_{v}^{a,b} \right) \;\;\leq \;\;
\sum_{i=1}^a \rank \left( M_{v_2}^{a-i,b} \right) +
\sum_{i=1}^{b-1} \rank \left( M_{v_1}^{a,b-i} \right)
\end{equation}

\subsection{Defining a Backward Path from the Output}

For simplicity and without loss of generality, we assume that $d$ is even. Let $v_0$ be the output gate of the circuit. We will define a path $(v_0,v_1,\ldots,v_t)$ that goes backward on the circuit, starting from the output gate. 
For every~$i$, we will define $v_{i+1}$ to be one of the children of~$v_i$, according to the rules specified below, until the path stops.

For every $v_i$, we will also define $a_i,b_i \in \N$. 
Let $a_0,b_0 = d/2$.
For every $i$, we will have,
$a_{i+1} \leq a_i$ and $b_{i+1} \leq b_i$. 
For every $i$, we define 
$$
r_i \; = \; \rank \big( M_{v_i}^{a_i,b_i} \big)
$$
If $r_i \leq 1$, the path stops.
Note that if $a_i=0$ or $b_i=0$ then $r_i \leq 1$ and the path stops.
Also, if $v_i$ is a leaf then $r_i \leq 1$ and the path stops.

If $v_i$ is a product gate, we will also define a set $S_i$ of product gates in the circuit. Every gate in $S_i$ will be a child of $v_{i-1}$ and will not be a descendant of $v_i$. 
Note that $v_0$ is a sum gate so we do not need to define~$S_0$.

We will use two constants: 
$$c=64$$
and 
$$\alpha=\tfrac{1}{4}$$

\subsubsection*{Sum Gates:}

Assume that  $v_i$ is a sum gate, with a set of children 
$N_i = \{u_1,\ldots,u_m\}$. 
We set $$a_{i+1} = a_i \;\; , \;\; b_{i+1} = b_i$$
Let
$$
t_i \;=\; \tfrac{c}{\sqrt{n}} \cdot r_i \;=\; \tfrac{c}{\sqrt{n}} \cdot \rank \big( M_{v_i}^{a_i,b_i} \big)
$$

\paragraph{\bf Rule-1:}
If there exists  $u \in N_i$ with 
$\rank \big( M_{u}^{a_i,b_i} \big) \geq t_i $, we set 
$v_{i+1} = u$ 
and we set $S_{i+1} = \emptyset$.
We say in this case that $v_{i+1}$ was chosen by Rule-1, to keep a record of how $v_{i+1}$ was chosen.

Note that we have,
\begin{equation} \label{eq:I1}
r_{i+1} \;=\;\rank \big( M_{u}^{a_i,b_i} \big) \;\geq\; t_i \;=\; \tfrac{c}{\sqrt{n}} \cdot r_i
\end{equation}

\paragraph{\bf Rule-2:}
Otherwise,
for every $k \in \N$,
let  
$$ N_{i,k} \;=\; \left\{ u \in N_i \;\; \middle| \;\; \alpha^{k+1}  \cdot t_i \; \leq \; \rank \big( M_{u}^{a_i,b_i} \big) \;<\; \alpha^k \cdot t_i \right\}$$
\begin{equation} \label{eq:rik}
r_{i,k} \;=\; \sum_{u \in N_{i,k}}  \rank \big( M_{u}^{a_i,b_i} \big) 
\;<\; |N_{i,k}| \cdot \alpha^k \cdot t_i
\end{equation}
By Equation~\eqref{eq:rank-sum},
$$   \sum_{k \in \N} r_{i,k}  \;\geq\; r_i
$$
Hence, there exists $k \in \N$, such that,
$$   r_{i,k}  \;\geq\; 2^{-(k+1)} \cdot r_i
$$
Hence, for that $k$, by Equation~\eqref{eq:rik},
$$
|N_{i,k}| \cdot \alpha^k \cdot t_i \; > \; 2^{-(k+1)} \cdot r_i
$$
Hence, by the definition of $t_i$,
$$
|N_{i,k}| \cdot \alpha^k \cdot \tfrac{c}{\sqrt{n}} \; > \; 2^{-(k+1)} 
$$
That is,
\begin{equation} \label{nik}
|N_{i,k}|    \;> \; \tfrac{\sqrt{n}}{2c} \cdot (2\alpha)^{-k} 
\end{equation}

Since the circuit is an acyclic graph, there must be a node $u \in N_{i,k}$ such that none of the nodes $u' \in N_{i,k}$ is a descendant of $u$. We set 
$v_{i+1} = u$,  and we set $S_{i+1} = N_{i,k}$.
We say in this case that $v_{i+1}$ was chosen by Rule-2, to keep a record of how $v_{i+1}$ was chosen, and
we define $k_{i} :=k$, to record the $k$ that was used to choose $v_{i+1}$.

Note that all the gates in $S_{i+1}$ are product gates. This is true because all of them are children of a sum gate, so they can be either  product gates or leaves. However, they cannot be leaves because for every $u' \in S_{i+1}$, we have 
$$\rank \big( M_{u'}^{a_i,b_i} \big) \;\;\geq\;\; \alpha^{k+1}  \cdot t_i \;\;>\;\; 0 $$
where $a_i,b_i >0$, while for a leaf $w$ we always have $M_{w}^{a_i,b_i} =0$.

Note that we have,
\begin{equation} \label{eq:I2}
r_{i+1} \;=\;\rank \big( M_{u}^{a_i,b_i} \big) \;\geq\; \alpha^{k+1}  \cdot t_i \;=\; \alpha^{k+1}  \cdot \tfrac{c}{\sqrt{n}} \cdot r_i
\end{equation}

\subsubsection*{Product Gates:}

Assume that  $v_i$ is a product gate, with children $u_1,u_2$.
By Equation~\eqref{eq:rank-product},
$$
\sum_{j=1}^{a_i} \rank \left( M_{u_2}^{a_i-j,b_i} \right) +
\sum_{j=1}^{b_i-1} \rank \left( M_{u_1}^{a_i,b_i-j} \right)
\;\;\geq \;\;
\rank \left( M_{v_i}^{a_i,b_i} \right) 
\;\;=\;\; r_i
$$
Therefore, there must exist either $j \in \{1,\ldots,a_i\}$, such that,
$$
\rank \left( M_{u_2}^{a_i-j,b_i} \right) \;\;\geq \;\; 2^{-(j+1)} \cdot r_i
$$
or $j \in \{1,\ldots,b_i-1\}$, such that,
$$
\rank \left( M_{u_1}^{a_i,b_i-j} \right) \;\;\geq \;\; 2^{-(j+1)} \cdot r_i
$$
In the first case, we set 
$v_{i+1} = u_2$,  and we set $a_{i+1} = a_i-j$, $b_{i+1} = b_i$.
and
we define $j_{i} :=j$, to record the $j$ that was used to choose $a_{i+1}$.
In the second case, we set 
$v_{i+1} = u_1$,  and we set $a_{i+1} = a_i$, $b_{i+1} = b_i-j$.
and
we define $j_{i} :=j$, to record the $j$ that was used to choose~$b_{i+1}$.

Note that in both cases we have,
\begin{equation} \label{eq:I3}
r_{i+1}   \;\;\geq \;\; 2^{-(j+1)} \cdot r_i  \;\;\geq \;\; 2^{-2j} \cdot r_i
\end{equation}
and $a_{i+1} + b_{i+1} = a_i+b_i-j$.

\subsection{Analysis}

Let $(v_0,v_1,\ldots,v_t)$ be the backward path that we defined.
Assume that $r_0= n^{d/2}$ and we stopped when $r_t \leq 1$. Thus,
\begin{equation} \label{eq:a0}
\frac{r_t}{r_0} \;=\; \prod_{i=0}^{t-1} \left(\frac{r_{i+1}}{r_{i}}\right) \;\leq\; n^{-d/2}
\end{equation}

Let $I_1 \subseteq \{0,\ldots,t-1\}$ be the set of indices $i$ where $v_i$ is a sum gate and $v_{i+1}$ was chosen by Rule-1.
Let $I_2 \subseteq \{0,\ldots,t-1\}$ be the set of indices $i$ where $v_i$ is a sum gate and $v_{i+1}$ was chosen by Rule-2.
Let $I_3 \subseteq \{0,\ldots,t-1\}$ be the set of indices $i$ where $v_i$ is a product gate.
We have,
\begin{equation} \label{eq:a1}
\frac{r_t}{r_0} \;=\; \prod_{i\in I_1} \left(\frac{r_{i+1}}{r_{i}}\right) \cdot
\prod_{i\in I_2} \left(\frac{r_{i+1}}{r_{i}}\right) \cdot
\prod_{i\in I_3} \left(\frac{r_{i+1}}{r_{i}}\right)
\end{equation}

For $I_1$, we have by Equation~\eqref{eq:I1},
\begin{equation} \label{eq:a2}
\prod_{i\in I_1} \left(\frac{r_{i+1}}{r_{i}}\right) \;\geq\;
\prod_{i\in I_1} \tfrac{c}{\sqrt{n}}
\;=\;
\left(\tfrac{c}{\sqrt{n}} \right)^{|I_1|}
\;\geq\;
\left(\tfrac{\alpha c}{\sqrt{n}} \right)^{|I_1|}
\end{equation}

For $I_2$, we have by Equation~\eqref{eq:I2},
\begin{equation} \label{eq:a3}
\prod_{i\in I_2} \left(\frac{r_{i+1}}{r_{i}}\right) \;\geq\;
\prod_{i\in I_2} \left( \alpha^{k_i+1}  \cdot \tfrac{c}{\sqrt{n}} \right)
\;=\;
\left(\tfrac{\alpha c}{\sqrt{n}} \right)^{|I_2|} \cdot \alpha^{\sum_{i\in I_2}k_i} 
\end{equation}

For $I_3$, we have by Equation~\eqref{eq:I3},
\begin{equation} \label{eq:a4}
\prod_{i\in I_3} \left(\frac{r_{i+1}}{r_{i}}\right) \;\geq\;
\prod_{i\in I_3} 2^{-2j_i} \;=\; 2^{-2\sum_{i\in I_3} j_i} \;\geq\; 4^{-d}
\end{equation}
where the last inequality holds since $\sum_{i\in I_3} j_i \leq d$, which is true since $a_0+b_0=d$ and $a_t+b_t\geq 0$, and since when $v_i$ is a sum gate, we have 
$a_{i+1} + b_{i+1} = a_i+b_i$, and when $v_i$ is a product gate, we have
$a_{i+1} + b_{i+1} = a_i+b_i-j_i$.

Note that since in the path $(v_0,v_1,\ldots,v_t)$ the gates are alternating, and since 
$a_0+b_0=d$ and $a_t+b_t\geq 1$, and since when~$v_i$ is a sum gate, we have 
$a_{i+1} + b_{i+1} = a_i+b_i$, and when~$v_i$ is a product gate, we have
$a_{i+1} + b_{i+1} \leq a_i+b_i-1$,
we have at most $d$ sum gates in $(v_0,v_1,\ldots,v_t)$, and hence, $|I_1|+|I_2| \leq d$.
($a_t+b_t\geq 1$ because if $a_i=0$ or $b_i=0$ the path stops, and at each step only one of them can decrease).

Thus, by Equations~\eqref{eq:a0}~\eqref{eq:a1}~\eqref{eq:a2}~\eqref{eq:a3}~\eqref{eq:a4},
$$
n^{-d/2}
\;\geq\;
\frac{r_t}{r_0} 
\;\geq\;
\left(\tfrac{\alpha c}{\sqrt{n}} \right)^{|I_1|+|I_2|} \cdot 
\alpha^{\sum_{i\in I_2}k_i} \cdot 4^{-d}
\;\geq\;
\left(\tfrac{\alpha c/4}{\sqrt{n}} \right)^{d} \cdot 
\alpha^{\sum_{i\in I_2}k_i} 
$$
That is,
$$
\left(\alpha c/4 \right)^{d} \cdot 
\alpha^{\sum_{i\in I_2}k_i}
\;\leq\;
1
$$
Substituting, $c=64$ and $\alpha = 1/4$, we conclude,
$$
4^{d} \cdot 
\left(\tfrac{1}{4}\right)^{\sum_{i\in I_2}k_i}
\;\leq\;
1
$$
and hence,
$$
\sum_{i\in I_2}k_i \geq d
$$
(which also implies that $I_2$ is not empty).

Let 
$$
S = \bigcup_{i\in I_2} S_{i+1}
$$
Recall that for every $i\in I_2$, all the nodes in $S_{i+1}$ are (non-scalar) product gates. Recall that by the way that $v_{i+1}$ was chosen, by Rule-2,  none of the nodes in $S_{i+1}$ are descendants of~$v_{i+1}$. (We don't count $v_{i+1}$ as a descendant of itself). On the other hand, for every 
$i'>i$, such that $i'\in I_2$, all the nodes in $S_{i'+1}$ are descendants of $v_{i+1}$. Hence, the sets $\{S_{i+1}\}_{i\in I_2}$ are disjoint.
Thus,
$$
|S| = \sum_{i\in I_2} |S_{i+1}|
$$

By Equation~\eqref{nik}, (substituting $\alpha = 1/4$), for every $i\in I_2$,
$$
|S_{i+1}|    \;> \; \tfrac{\sqrt{n}}{2c} \cdot 2^{k_i}
$$
Thus, by convexity of the function $2^x$, and since $\sum_{i\in I_2}k_i \geq d$, 
and since for every $x$, $2^x > x$,
$$
|S|  \;> \; \tfrac{\sqrt{n}}{2c} \cdot \sum_{i\in I_2}  2^{k_i}
\;\geq \; \tfrac{\sqrt{n}}{2c} \cdot |I_2| \cdot 2^{\sum_{i\in I_2}  {k_i/|I_2|}}
\;\geq \; \tfrac{\sqrt{n}}{2c} \cdot |I_2| \cdot 2^{  {d/|I_2|}}
\;> \;
\tfrac{\sqrt{n}}{2c} \cdot |I_2| \cdot {d/|I_2|} \;=\;
\tfrac{\sqrt{n}}{2c} \cdot d
$$

Thus, $S$ is a set of at least $\Omega(  d   \sqrt{n} )$  non-scalar product gates in $C$.
This completes the proof of Theorem~\ref{thm:1}.
\end{proof}

\section{Lower Bounds for Arithmetic Circuits over Rings}

We observe a connection between lower bounds for non-commutative arithmetic circuits over fields
and lower bounds for arithmetic circuits over non-commutative rings.
The connection follows by the following lemma.

Given a field $\F$ and sets of variables
$X=\{x_1,\ldots,x_n\}$, $Z=\{z_1,\ldots,z_n\}$, 
the lemma considers non-commutative polynomials in
$\F\langle X \rangle$, $\F\langle Z \rangle$, and $\F\langle Z, X \rangle = \F\langle Z\cup X \rangle$.

\begin{lemma} \label{lemma:poly}
Let $\F$ be a field.
Let $X=\{x_1,\ldots,x_n\}$, $Z=\{z_1,\ldots,z_n\}$ be sets of variables. 
Let $f(x_1,\ldots,x_n) \in \F\langle X \rangle$.
Let $g(z_1,\ldots, z_n, x_1,\ldots,x_n) \in \F\langle Z, X \rangle$.

Assume that for every 
$h_1,\ldots, h_n \in \F\langle Z \rangle $, we have,
$$g(z_1,\ldots, z_n, h_1,\ldots,h_n)= f(h_1,\ldots,h_n)$$
(as polynomials in $\F\langle Z \rangle$).

Let $g^0(x_1,\ldots,x_n) \in \F\langle X \rangle$ be 
the restriction of $g$ to monomials in which none of the variables in $Z$ appear, that is,
$g^0(x_1,\ldots,x_n) = g(0,\ldots, 0, x_1,\ldots,x_n)$.
Then 
$$g^0(x_1,\ldots,x_n) = f(x_1,\ldots,x_n)$$ (as polynomials in $\F\langle X \rangle$).
\end{lemma}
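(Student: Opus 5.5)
The plan is to specialize the hypothesis at $h_i = z_i^D$, where $D \in \N$ is strictly larger than the degree of $g$ (and at least~$2$), and then compare degrees modulo~$D$. Write $g = g^0 + g'$, where $g'$ is the sum of the monomials of $g$ (words over $Z\cup X$) containing at least one variable from $Z$. By the hypothesis,
$$g^0(z_1^D,\ldots,z_n^D) + g'(z_1,\ldots,z_n,z_1^D,\ldots,z_n^D) = f(z_1^D,\ldots,z_n^D)$$
in $\F\langle Z\rangle$.

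First consider a single word $w$ over $Z\cup X$ occurring in $g'$, with $p\geq 1$ letters from $Z$ and $q$ letters from $X$. Under the substitution $x_i\mapsto z_i^D$ it maps to a single word in $\F\langle Z\rangle$ of degree $p+qD$. Since $1\le p\le \deg g < D$, this degree is not divisible by $D$. Every word appearing in $g^0(z_1^D,\ldots,z_n^D)$ or in $f(z_1^D,\ldots,z_n^D)$ has degree divisible by $D$, because it is the image of a word in $X$ only. Taking the homogeneous components of degree divisible by $D$ on both sides (a linear projection) therefore kills the $g'$ contribution. This gives
$$g^0(z_1^D,\ldots,z_n^D) = f(z_1^D,\ldots,z_n^D).$$
Note that we do not claim $g'$ vanishes under the substitution term by term. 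Only its projection onto degrees divisible by $D$ vanishes, and that is all we need.

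Next, the map $\phi:\F\langle X\rangle\to\F\langle Z\rangle$ given by $x_i\mapsto z_i^D$ is injective. It sends each word $x_{i_1}\cdots x_{i_m}$ to the word $z_{i_1}^D\cdots z_{i_m}^D$. Distinct words have distinct images: the image word can be parsed uniquely into blocks of length $D$, each block consisting of a single repeated letter, which recovers $i_1,\ldots,i_m$. So $\phi$ maps the basis of words injectively to a set of words, and is therefore injective as a linear map. Applying this to $g^0-f$ yields $g^0=f$.

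The only delicate point is choosing $D$ large enough that $p$ is never a multiple of $D$, which holds whenever $D>\deg g$, together with the observation that the hypothesis only needs to be used for this one substitution. The rest is bookkeeping.
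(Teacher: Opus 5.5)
Your proof is correct and follows the paper's argument: substitute $x_i \mapsto z_i^D$ with $D > \deg g$, separate words according to whether their degree is divisible by $D$, and use that $x_i \leftrightarrow z_i^D$ is injective on words. One small correction to your caveat: since every word of $g'(z_1,\ldots,z_n,z_1^D,\ldots,z_n^D)$ has degree not divisible by $D$, the same comparison shows that this whole polynomial vanishes, not just its projection, and this is how the paper states it.
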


\begin{proof}
Let $g^0(x_1,\ldots,x_n) \in \F\langle X \rangle$ be 
the restriction of $g$ to monomials in which none of the variables in $Z$ appear. 
Let $g'(z_1,\ldots, z_n,x_1,\ldots,x_n) \in \F\langle Z,X \rangle$ be 
the restriction of $g$ to monomials in which at least one of the variables in $Z$ appear.
Thus,
$$g = g^0 + g'$$

Let $D \in \N$ be larger than the degree of $g$. For every 
$i \in \{1,\ldots, n\}$, 
let $h_i = z_i^D$. By the assumption,
$$g(z_1,\ldots, z_n, z_1^D,\ldots,z_n^D)= f(z_1^D,\ldots,z_n^D)$$
Hence, 
$$g'(z_1,\ldots, z_n, z_1^D,\ldots,z_n^D) + g^0(z_1^D,\ldots,z_n^D)
= f(z_1^D,\ldots,z_n^D)$$

Note that all monomials of $f(z_1^D,\ldots,z_n^D)$ and $g^0(z_1^D,\ldots,z_n^D)$  are of degree divisible by~$D$.
On the other hand,
in every monomial of $g'(z_1,\ldots, z_n,x_1,\ldots,x_n)$,
the number of times that variables in $Z$ appear is larger than 0 and smaller than $D$.
Therefore, when substituting $x_1 =z_1^D, \ldots, x_n = z_n^D$, each of the monomials of  $g'(z_1,\ldots, z_n,x_1,\ldots,x_n)$ gives a monomial of degree not divisible by $D$.
Thus, all monomials of $g'(z_1,\ldots, z_n, z_1^D,\ldots,z_n^D)$ are of degree not divisible by $D$,
and since in  $f(z_1^D,\ldots,z_n^D)$ and $g^0(z_1^D,\ldots,z_n^D)$ all monomials are of degree divisible by $D$, all monomials of $g'(z_1,\ldots, z_n, z_1^D,\ldots,z_n^D)$ must cancel, and we get $g'(z_1,\ldots, z_n, z_1^D,\ldots,z_n^D)=0$.

Thus,
$$g^0(z_1^D,\ldots,z_n^D) = f(z_1^D,\ldots,z_n^D)$$
and hence,
$$g^0(x_1,\ldots,x_n) = f(x_1,\ldots,x_n)$$
as $x_i \leftrightarrow z_i^D$ gives a bijection between words in 
$\{x_1,\ldots,x_n\}$ and words in $\{z_1^D,\ldots,z_n^D\}$.
\end{proof}

\subsection{The Connection}

Let $\F$ be a field.
Let $X=\{x_1,\ldots,x_n\}$, $Z=\{z_1,\ldots,z_n\}$ be sets of variables. 
Let $R = \F\langle Z \rangle$.
Let $f(x_1,\ldots,x_n) \in \F\langle X \rangle$.
We will consider non-commutative arithmetic circuits for $f$ over the field $\F$, on one hand, and on the other hand, arithmetic circuits over the ring $R$ for the induced polynomial function $f(x_1,\ldots,x_n)$, as a function from $R^n$ to $R$.

\begin{theorem}\label{thm:2}
Let $\F$ be a field.
Let $X=\{x_1,\ldots,x_n\}$, $Z=\{z_1,\ldots,z_n\}$ be sets of variables. 
Let $R = \F\langle Z \rangle$.
Let $f(x_1,\ldots,x_n) \in \F\langle X \rangle$.
Assume that there exists an arithmetic circuit with set of input variables $X$,
over the ring $R$, that computes 
$f(x_1,\ldots,x_n)$ as a function from $R^n$ to $R$.
Then, there exists a non-commutative arithmetic circuit for $f$ over the field~$\F$ with the same number of sum gates, the same number of product gates, and the same size and depth. 
\end{theorem}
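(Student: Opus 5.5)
Let $C$ be the given arithmetic circuit over $R=\F\langle Z \rangle$ with input variables $X$. The plan is to read $C$ formally, as a non-commutative circuit over $\F$ in the enlarged variable set $Z\cup X$, and then kill the $Z$-variables by setting them to zero. Concretely, every leaf labeled with a ring element $p\in\F\langle Z\rangle$ is viewed as the formal polynomial $p$. By induction over the circuit, each node then computes a well-defined polynomial in $\F\langle Z, X\rangle$: sums go to sums, and ordered products go to ordered products. Let $g(z_1,\ldots,z_n,x_1,\ldots,x_n)\in \F\langle Z,X\rangle$ be the polynomial at the output gate.

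The first step is the compatibility of evaluation with this formal computation. The map $\F\langle Z,X\rangle\to\F\langle Z\rangle$ that fixes each $z_j$ and sends $x_i\mapsto h_i$ is a ring homomorphism, by the universal property of the free algebra. Inducting bottom-up over the circuit, the value computed by each node on input $(h_1,\ldots,h_n)\in R^n$ equals the image of its formal polynomial under this map. Applying this at the output gate and using the hypothesis that $C$ computes $f$ as a function gives $g(z_1,\ldots,z_n,h_1,\ldots,h_n)=f(h_1,\ldots,h_n)$ for all $h_1,\ldots,h_n\in\F\langle Z\rangle$. This is exactly the hypothesis of Lemma~\ref{lemma:poly}, so $g^0=f$, where $g^0=g(0,\ldots,0,x_1,\ldots,x_n)$.

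The second step builds the new circuit $C'$ over $\F$. It has the same underlying DAG as $C$, the same gate labels and the same left/right order at product gates. The only change is that each leaf labeled $p\in\F\langle Z\rangle$ is relabeled with its constant term $p^0=p(0,\ldots,0)\in\F$. Sizes, depths and the numbers of sum and product gates are therefore unchanged. Substituting $z_j\mapsto 0$ is again a homomorphism $\F\langle Z,X\rangle\to\F\langle X\rangle$. By the same bottom-up induction, each node of $C'$, read as a non-commutative circuit over $\F$, computes the image of the corresponding node's formal polynomial under this map. Hence the output of $C'$ is $g(0,\ldots,0,x_1,\ldots,x_n)=g^0=f$.

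I expect no serious obstacle, because the substantive cancellation argument is already packaged in Lemma~\ref{lemma:poly}. The one point needing care is the first step. We must distinguish the circuit computing $f$ only as a \emph{function} on $R^n$ from the formal polynomial $g$ that it computes, and justify that evaluation commutes with the circuit operations. This holds because substitution in a free algebra is a homomorphism that respects the order of multiplication, and it also explains why the order at product gates must be preserved in $C'$.
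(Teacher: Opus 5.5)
Your proposal is correct and follows the paper's proof essentially step for step. You read $C$ as computing a formal polynomial $g\in\F\langle Z,X\rangle$, apply Lemma~\ref{lemma:poly} to get $g^0=f$, and replace each leaf label $p$ by its constant term $p^0$; your use of the substitution homomorphisms ($x_i\mapsto h_i$ and $z_j\mapsto 0$) simply spells out the two bottom-up inductions that the paper states more briefly.
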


\begin{proof}
Let $C$ be an arithmetic circuit with set of input variables $X$,
over the ring $\F\langle Z \rangle$, that computes 
$f(x_1,\ldots,x_n)$ as a function from $\F\langle Z \rangle^n$ to $\F\langle Z \rangle$.

Every leaf in $C$ is labeled with either a ring element $p \in \F\langle Z \rangle$, or an input variable $x_i$. 
We 
can define for each
node in $C$
a non-commutative polynomial in the ring of non-commutative polynomials $\F\langle Z, X \rangle$ that is computed by that node, as usual: 
A leaf just computes the input variable $x_i$, or ring element $p\in \F\langle Z \rangle$, 
that labels it. 
A sum gate computes the sum of the  polynomials computed by its children.
A product gate computes the product of the polynomial computed by its left child and the polynomial computed by its right child, in that order.
The polynomial computed by the circuit is the polynomial computed by the output gate.

Let $g(z_1,\ldots, z_n, x_1,\ldots,x_n) \in \F\langle Z, X \rangle$ be the non-commutative polynomial computed by~$C$. Since $C$ computes 
$f(x_1,\ldots,x_n)$ as a function from $\F\langle Z \rangle^n$ to $\F\langle Z \rangle$,
for every 
$h_1,\ldots, h_n \in \F\langle Z \rangle $, we have,
$$g(z_1,\ldots, z_n, h_1,\ldots,h_n)= f(h_1,\ldots,h_n)$$
(as elements in $\F\langle Z \rangle$).

Let $g^0(x_1,\ldots,x_n) \in \F\langle X \rangle$ be 
the restriction of $g$ to monomials in which none of the variables in $Z$ appear. By Lemma~\ref{lemma:poly},
$$g^0(x_1,\ldots,x_n) = f(x_1,\ldots,x_n)$$ (as polynomials in $\F\langle X \rangle$).

We can easily modify $C$ to be a non-commutative arithmetic circuit over $\F$ for the polynomial $g^0(x_1,\ldots,x_n)$ as follows. 
For every leaf in $C$, labeled with $p\in \F\langle Z \rangle$, we just replace $p$ with its constant term $p^0 \in \F$.
By induction over the circuit, for every node $v$ in $C$ that computes a non-commutative polynomial $g_v(z_1,\ldots, z_n, x_1,\ldots,x_n) \in \F\langle Z, X \rangle$, the modified circuit computes the restriction of $g_v$ to monomials in which none of the variables in $Z$ appear.
Hence, the output node computes $g^0$.
\end{proof}

\begin{corollary} \label{cor:1}
Let $n,d\geq 2$. Assume without loss of generality that $d$ is even.
Let $\F$ be a field.
Let $X=\{x_1,\ldots,x_n\}$, $Z=\{z_1,\ldots,z_n\}$ be sets of variables. 
Let $R = \F\langle Z \rangle$.
Let $f \in \mathbb{F}\langle X \rangle$ be such that
the matrix~$M_f^{d/2,d/2}$ has full rank (see Subsection~\ref{subsec:notation}).
Let $C$ be an arithmetic circuit with set of input variables $X$,
over the ring $R$, that computes 
$f(x_1,\ldots,x_n)$ as a function from $R^n$ to $R$.
Then, there are at least 
$\Omega(  d   \sqrt{n} )$  product gates in $C$.
\end{corollary}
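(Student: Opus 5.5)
The plan is to combine Theorem~\ref{thm:2} with Theorem~\ref{thm:1}. Let $C$ be an arithmetic circuit over $R=\F\langle Z \rangle$ that computes $f$ as a function from $R^n$ to $R$. By Theorem~\ref{thm:2}, there is a non-commutative arithmetic circuit $C'$ over $\F$ computing $f$, with the same number of product gates as $C$. The construction in the proof of Theorem~\ref{thm:2} only replaces each ring label $p$ by its constant term $p^0$, so $C'$ is literally the same graph as $C$. In particular, each product gate of $C'$ corresponds to exactly one product gate of $C$.

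Since $M_f^{d/2,d/2}$ has full rank, Theorem~\ref{thm:1} applies to $C'$ once $n\geq 2^{12}$. It gives at least $\Omega(d\sqrt{n})$ non-scalar product gates in $C'$. Each of these is a product gate of $C'$, hence corresponds to a product gate of $C$. So $C$ has at least $\Omega(d\sqrt{n})$ product gates.

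The only step needing care is the range of $n$. The corollary is stated for every $n\geq 2$, while Theorem~\ref{thm:1} assumes $n\geq 2^{12}$. For $2\leq n<2^{12}$, the quantity $\sqrt{n}$ is bounded by an absolute constant, so it suffices to show that $C$ has $\Omega(d)$ product gates. I would argue this by degree: $M_f^{d/2,d/2}$ has full rank, so it is nonzero, hence $f$ has a monomial of degree $d$ and $f$ has degree at least $d$. The same holds for the polynomial $g^0=f$ computed by $C'$.

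In a circuit with $s$ product gates, every node computes a polynomial of degree at most $2^s$. This is proved by induction in topological order: sum gates do not increase the maximum degree, and a product gate can at most double it. It yields only $s\geq \log_2 d$, which is too weak for the $\Omega(d)$ I need. Instead I would bound $s$ from below using only non-scalar product gates, as in Theorem~\ref{thm:1}.

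For this, I would reuse the backward path of Theorem~\ref{thm:1} in its simplest form, applied to the modified circuit from Subsection~\ref{subsec:modifying}, which has no more non-scalar product gates than $C'$. Along the path, the ranks start at $r_0=n^{d/2}$ and each product step reduces $a+b$ by at least $1$, so there are at least $d/2$ product gates on the path before $a$ or $b$ hits zero. Specifically, I would use the rule for product gates together with the sum-gate rule that simply picks a child with nonzero rank. These path gates are distinct, so $C'$, and hence $C$, has $\Omega(d)$ product gates. This gives $\Omega(d)=\Omega(d\sqrt{n})$ for $n<2^{12}$, with the constant absorbed.

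I expect this small-$n$ bookkeeping to be the only obstacle. One could instead read the $\Omega$ as asymptotic in $n$, in which case the corollary is immediate from the two theorems. The whole argument is therefore a direct combination of Theorems~\ref{thm:2} and~\ref{thm:1}, with a short degree and path argument covering bounded $n$.
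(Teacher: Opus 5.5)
Your main step matches the paper's proof, which is the single line ``Follows by Theorem~\ref{thm:1} and Theorem~\ref{thm:2}'': Theorem~\ref{thm:2} gives a non-commutative circuit over $\F$ on the same graph, Theorem~\ref{thm:1} applies to it, and non-scalar product gates are product gates. That one-line proof only covers $n\geq 2^{12}$ (or an asymptotic reading of $\Omega$). You were right to notice the mismatch with ``$n\geq 2$''.

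However, your patch for $2\leq n<2^{12}$ does not work. The inference ``each product step reduces $a+b$ by at least $1$, so there are at least $d/2$ product gates on the path'' runs the wrong way. A decrease of at least $1$ per step bounds the number of product steps from \emph{above}, by $d$, not from below. Nothing keeps $j_i$ small: the product rule allows any $j$ up to $a_i$ or $b_i-1$, and large $j$ can be forced. For example, if $f_v=gh$ with $g,h$ homogeneous of degree $d/2$, then $M_v^{d/2,d/2}=M_g^{d/2,0}\otimes M_h^{0,d/2}$, and the only nonzero term has $j=d/2$. So your path may contain $O(1)$ product gates.

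Picking ``any child of nonzero rank'' at sum gates discards the rank control that is the only thing penalizing large jumps. Even the paper's quantitative rules give nothing for small $n$. Each product step may lose a factor $4$, and $4^{d}\geq n^{d/2}$ once $n\leq 16$, so the accounting no longer forces any Rule-2 step.

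A correct fix counts over the whole circuit rather than along a path, giving a Nisan-style bound.
\begin{itemize}
\item For $g\in\F\langle X\rangle$ and a word $w$, let $\partial_w g$ keep the monomials of $g$ that begin with $w$ and delete that prefix. Let $\mathcal{D}(g)$ be the span of all $\partial_w g$, including $w$ empty.
\item Then $\mathcal{D}(\sum_i c_i g_i)\subseteq\sum_i\mathcal{D}(g_i)$ and $\mathcal{D}(gh)\subseteq\mathcal{D}(g)\,h+\mathcal{D}(h)$.
\item Order the $s$ product gates of the circuit topologically. Start from $\mathrm{span}\{1,x_1,\ldots,x_n\}$ and add $\mathcal{D}(f_{u_1})\cdot f_{u_2}$ for each gate with children $u_1,u_2$, one gate at a time. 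This shows every $\mathcal{D}(f_v)$ lies in a space of dimension at most $(n+1)2^s$.
\item The rows of $M_f^{d/2,d/2}$ are the degree-$d/2$ parts of the $\partial_w f$ with $|w|=d/2$. Hence $n^{d/2}\leq (n+1)2^s$, i.e.\ $s\geq \tfrac d2\log_2 n-\log_2(n+1)$.
\item Together with $s\geq 1$ (since $\deg f\geq 2$), this gives $s=\Omega(d)=\Omega(d\sqrt n)$ for $n<2^{12}$.
\end{itemize}
Alternatively, state the corollary for $n\geq 2^{12}$, which is what the paper's argument actually establishes.
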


\begin{proof}
Follows by Theorem~\ref{thm:1} and Theorem~\ref{thm:2}.
\end{proof}

\end{document}